\newcommand{\R}{\mathcal{R}}
\DeclareMathOperator{\EX}{\mathbb{E}}
\DeclareMathOperator{\VAR}{\mathbb{V}}
\newcolumntype{L}[1]{>{\raggedright\let\newline\\\arraybackslash\hspace{0pt}}m{#1}}
\newcolumntype{C}[1]{>{\centering\let\newline\\\arraybackslash\hspace{0pt}}m{#1}}
\newcolumntype{R}[1]{>{\raggedleft\let\newline\\\arraybackslash\hspace{0pt}}m{#1}}
\NewDocumentCommand{\codeword}{v}{%
\texttt{{#1}}%
}
\newcommand\inlineMatlab[2][]{\lstinline[language=Matlab,#1]+#2+}
\newcommand\MyBox[2]{
  \fbox{\lower0.75cm
    \vbox to 1.2cm{\vfil
      \hbox to 1.2cm{\hfil\parbox{1.2cm}{\centering #1\\#2}\hfil}
      \vfil}%
  }%
}
\newtheorem{theorem}{Theorem}
\begin{document}

\title[Efficient Solution of Portfolio Optimization Problems via Dimension Reduction and Sparsification]{Efficient Solution of Portfolio Optimization Problems via Dimension Reduction and Sparsification}


\author*[1]{\fnm{Cassidy K.} \sur{Buhler} \orcidlink{0000-0003-4157-4273}}\email{cb3452@drexel.edu} 

\author[1]{\fnm{Hande Y.} \sur{Benson} \orcidlink{0000-0002-5554-9928}}\email{hvb22@drexel.edu}

\affil*[1]{\orgdiv{Department of Decision Sciences and MIS}, \orgname{Bennett S. LeBow College of Business, Drexel University}, \orgaddress{\street{3220 Market St}, \city{Philadelphia}, \postcode{19104}, \state{PA}, \country{USA}}}


\abstract{The Markowitz mean-variance portfolio optimization model aims to balance expected return and risk when investing. However, there is a significant limitation when solving large portfolio optimization problems efficiently: the large and dense covariance matrix. Since portfolio performance can be potentially improved by considering a wider range of investments, it is imperative to be able to solve large portfolio optimization problems efficiently, typically in microseconds. We propose dimension reduction and increased sparsity as remedies for the covariance matrix. The size reduction is based on predictions from machine learning techniques and the solution to a linear programming problem. We find that using the efficient frontier from the linear formulation is much better at predicting the assets on the Markowitz efficient frontier, compared to the predictions from neural networks. Reducing the covariance matrix based on these predictions decreases both runtime and total iterations. We also present a technique to sparsify the covariance matrix such that it preserves positive semi-definiteness, which improves runtime per iteration. The methods we discuss all achieved similar portfolio expected risk and return as we would obtain from  a full dense covariance matrix but with improved optimizer performance. }

\keywords{Portfolio optimization, Machine learning, Mathematical programming, Large-scale optimization}



\maketitle
\bmhead{Acknowledgments}

The authors would like to thank Drs. Christopher Gaffney and Matthew Schneider for their feedback on an earlier version of the paper.
\section{Introduction}\label{sec1}

The Markowitz mean-variance portfolio optimization model \cite{markowitz} aims to balance expected return and risk when investing. Investors with different risk tolerances can choose to put different levels of relative importance on these objectives and an efficient frontier can be constructed representing the optimal portfolios for all possible risk tolerances. 

Let $p_{t,j}$ represent the (known) closing price for stock $j = 1, \ldots, N$ on day $t = 1, \ldots, (T-1)$.  The return $x_{t,j}$ for stock $j = 1, \ldots, N$ on day $t = 2, \ldots, (T-1)$ is calculated as  
\begin{align}
x_{t,j} = \dfrac{ p_{t,j}-p_{t-1,t}}{p_{t-1,j}}.
\end{align}
For portfolio weights $w \in \R^N$, the portfolio return at time $t = 2, \ldots, (T-1)$ is computed by   
\begin{align}
R_t = \sum_{j=1}^N w_j x_{t,j}.
\end{align}
Denoting the return matrix as $X \in \R^{(T-2) \times N}$, we can also write $R = Xw$.

The portfolio return on day $T$, $\textbf{R}_T$, is
a random variable with 
\begin{align}
    \EX[\textbf{R}_T] = \mu^{T} w 
    , \qquad \VAR[\textbf{R}_T] =w^{T} \Sigma  w
\end{align}
where $\Sigma = \textnormal{cov}(X)$ and 
$\mu_j = \EX[x_{T,j}],\ j = 1,...,N$.

The Markowitz model is formulated as:

\begin{equation}
\begin{aligned}
\max_{w} \quad & \mu^{T}w - \lambda w^{T} \Sigma w\\
\textrm{s.t.} \quad &e^{T}w = 1\\
& w \geq 0
\label{mvmodel}
\end{aligned}
\end{equation} 
where $e$ is a ones vector of appropriate size and $\lambda \geq 0$ is the risk aversion parameter. As we vary $\lambda$, we obtain all optimal portfolios and represent them as the {\em efficient frontier} (Figure \ref{fig:ef}).

\begin{figure*}[t]
    \centering
    \includegraphics[width = 0.8\textwidth]{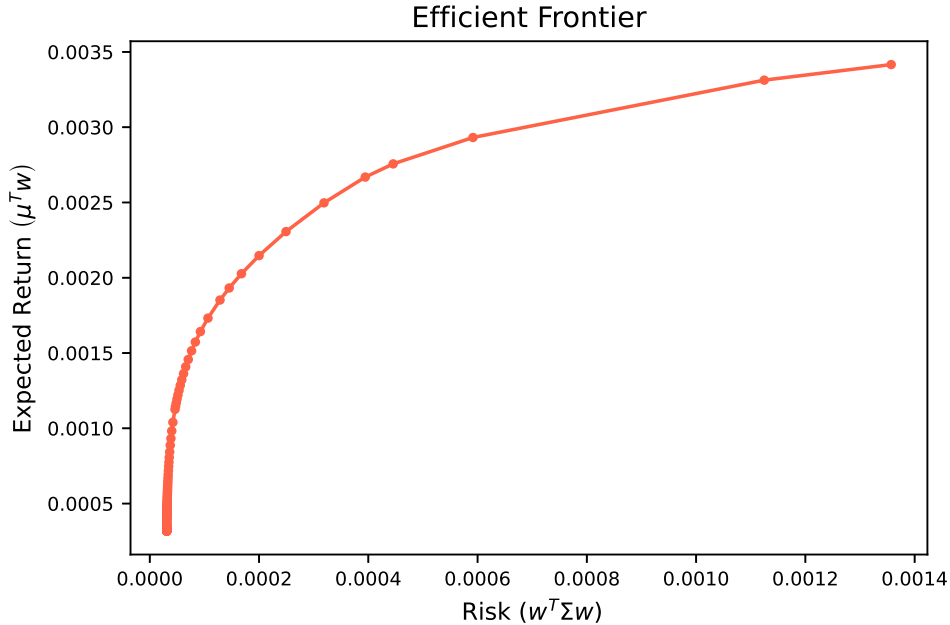}
    \caption{The efficient frontier for 374 stocks selected from the S\&P500 using the daily returns from January 23rd 2012 to December 31st 2019.}
    \label{fig:ef}
\end{figure*}

The expected return, $\mu$, is typically measured as the mean of historical returns in the Markowitz framework, but it can be replaced by any forecast of the returns. Risk is interpreted as volatility, and the covariance matrix, $\Sigma$, is used to capture the variance of returns for individual investments and to evaluate opportunities to mitigate (or increase) portfolio risk by simultaneously choosing investments with negatively (or positively) correlated returns.  While $\mu$ is typically updated daily (or as new forecasts are available), $\Sigma$ uses extensive historical information to assess correlation and is generally static.

There are other optimal portfolio selection strategies that attempt to improve on the Markowitz model and many of these models do not use a covariance matrix. However, we wish to focus primarily on the Markowitz model, as it is a simple framework and has many applications outside of investing, namely, in the energy sector \cite{dellano2016addressing, zhang2018optimal,arnesano2012extension,ostadi2020risk,kellner2019sustainability}. 

In addition, advancements in machine learning and other data science techniques have improved the ability to more accurately calculate $\mu$ and $\Sigma$, and are studied in recent literature \cite{ban2018machine,mulvey2017machine, wang2019applying,bennett2014data}.


As investment platforms grow, it becomes important to calculate these quantities and develop optimal portfolios in real time.  Since portfolio performance can be improved by considering a wider range of investments, it is imperative to solve large instances of \eqref{mvmodel} efficiently.  

The biggest challenge to efficiency is the use of $\Sigma$: it is large and dense. Nevertheless, it has several advantages we wish to exploit: it is a good representation of hedging opportunities, it is positive semidefinite (thus, \eqref{mvmodel} is a convex quadratic optimization problem), and it does not need to be updated frequently. 

For large instances, the number of stocks included in the portfolios on the efficient frontier is often significantly smaller than $N$. This property gives an opportunity to reduce or sparsify the covariance matrix by omitting entries that do not impact the optimal risk. 
 \begin{figure*}[t]
    \centering
    \includegraphics[width = 0.99\textwidth]{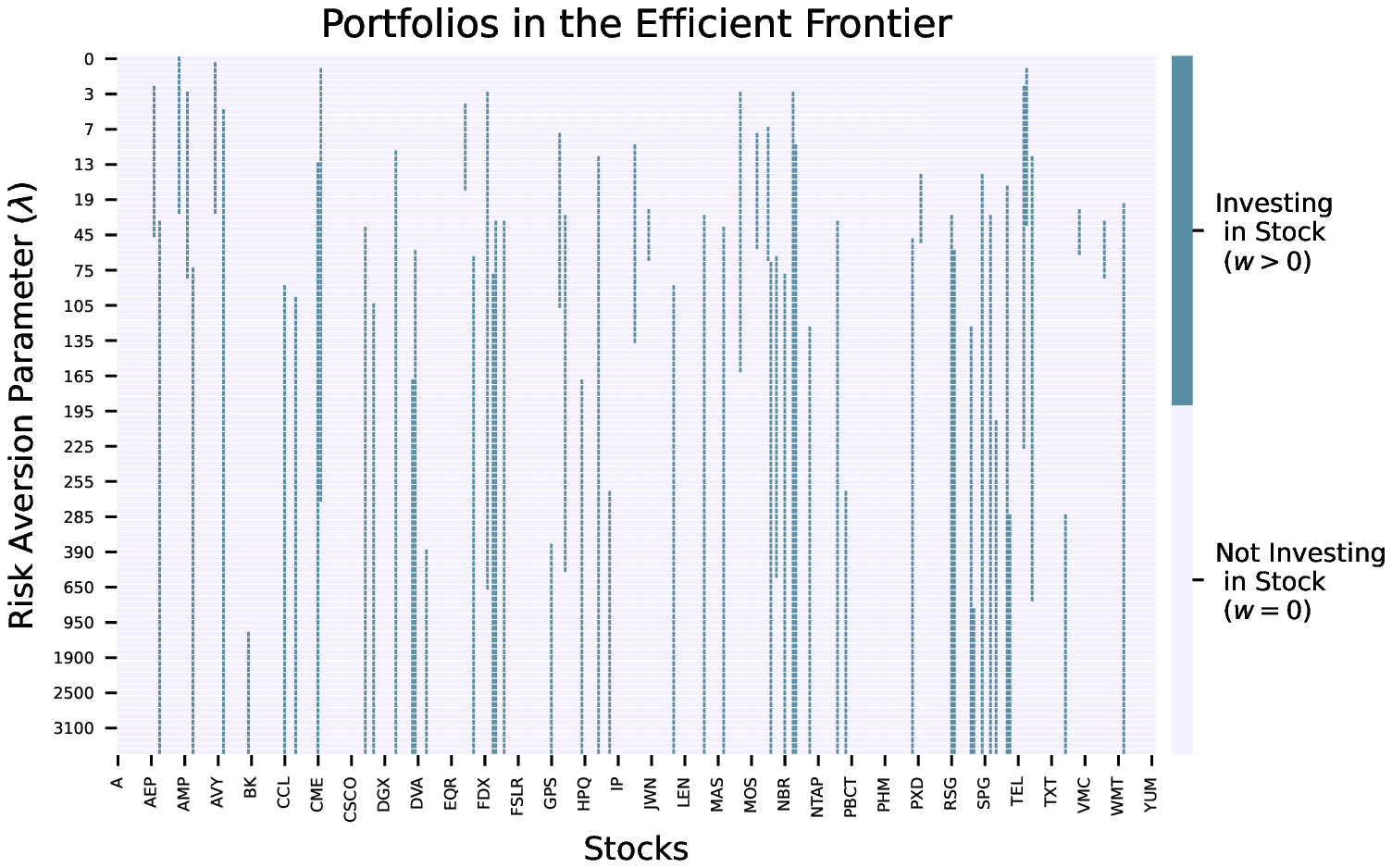}
    \caption{119 portfolios form the efficient frontier for the problem instance from Figure \ref{fig:ef}. Each row is an optimal portfolio for a given $\lambda$. A stock is teal if it is included in the portfolio, and light purple otherwise.}
    \label{fig:heatmap}
\end{figure*}
For example, in Figure \eqref{fig:heatmap}, only 64 out of the 374 stocks were included in any optimal portfolio. This implies that dimensions or the number of nonzeros in $\Sigma$ can be significantly reduced.

Therefore, in order to improve our efficiency, we will focus on {\em reducing the size or increasing the sparsity of $\Sigma$}.
\begin{itemize}
    \item The proposed size reduction techniques are predictive: machine learning algorithms, namely neural networks, or reformulation of \eqref{mvmodel} as a linear programming problem (LP) is used to predict the assets in the optimal portfolios along the efficient frontier.  The risk term in \eqref{mvmodel} is then limited to these assets.  For each approach, our hypothesis is that the predictive models will give a sufficient match to the investments in the optimal portfolios. 
    \item The sparsification techniques are based on the correlation matrix of the stock returns.  Correlations close to -1 or 1 not only represent strong and sustained relationships among pairs of stocks but they also represent significant enough contributions to the risk term of \eqref{mvmodel}.  As such, correlations close to zero can be replaced by zero, which yields a sparse matrix.  Care must be taken to retain positive semidefiniteness.  The main hypothesis for this technique is that the variance and large correlations are sufficient to determine the optimal portfolios and represent their risk.
\end{itemize}

We will refer to the three techniques as reduction by neural networks, reduction by LP, and sparsification by correlation and now present the motivation for choosing these three techniques for our study:
\begin{enumerate}
    \item {\em Reduction by neural networks}: Neural networks--once trained--are much faster and can yield better results than other probabilistic classifiers such as logistic regression or naive Bayes.  They have been used in literature for stock price prediction  \cite{solin2019forecasting, freitas2009prediction} or classifying stocks based on market performance  \cite{hargreaves2013stock, fu2018machine}. Existing literature is focused on predicting the data for \eqref{mvmodel}, whereas our goal in this paper is to predict which stocks appear in the optimal basis for \eqref{mvmodel} for any value of $\lambda$.  Neural networks are ideally suited for such tasks.
    \item {\em Reduction by LP}: Another way to predict the optimal solution is to solve a similar, simpler optimization problem. By redefining the way that risk is measured, \cite{vanderbei2020linear} outlines how to reformulate \eqref{mvmodel} as an LP. Not only can the LP be solved faster than the QP for each value of $\lambda$, the use of the parametric simplex method, as proposed by \cite{vanderbei2020linear}, allows us to recover the entire efficient frontier within the solution of a single LP. 
    \item {\em Sparsification by correlation}: \eqref{mvmodel} is typically solved using an interior-point method, which requires the factorization of a dense matrix in the KKT system at each iteration.  Moreover, the nonnegativity requirement means that the dense matrix changes in each iteration and requires $\mathcal{O}\left(N^3\right)$ operations. On the other hand, the time complexity of sparse factorization is proportional to the number of nonzero elements in the matrix \cite{gilbert1992sparse}, which can yield significant improvement.
\end{enumerate}
It is important to answer the question of why sparsification may be pursued when we know that a size reduction will typically improve runtimes.  To illustrate the motivation behind sparsification, we generated a large sparse matrix and a small dense matrix to compare how long it takes to factor each matrix. The first matrix is 99\% sparse and $10000 \times 10000$, and the second matrix is dense and $2000 \times 2000$. We used Cholesky factorization and factored the matrices 100 times. The large sparse matrix took an average of 0.0037 seconds to factor while the small dense matrix took 0.0752 seconds.  It is, therefore, possible that we can do better than size reduction techniques when a significant level of sparsity can be attained.

The outline of the paper is as follows.  In Section \ref{sec2}, we introduce our reduction methods with a long short-term memory neural network and a linear programming formulation, followed by the proposed method of sparsification by correlation.  In Section \ref{sec3}, we provide details on our financial data, which consists of daily closing prices for individual stocks in the S\&P 500 index.  The numerical results are provided in Section \ref{sec4} for all three proposed methods, and we account for other commonly used predictive methods (logistic regression, naive Bayes, and pattern recognition neural networks) as well.  We finish with a discussion of our findings and future work in Section \ref{sec5}.

\section{Methods}\label{sec2}

In this section, we present the three proposed methods in detail.

\subsection{Reduction by Neural Networks: Prediction with Long Short-Term Memory Network}\label{sec2sub1}

We used classification to predict the stocks that will be included in the portfolio and use these predictions to create a reduced covariance matrix. The network architecture we chose to implement is a long short-term memory (LSTM) network \cite{hochreiter1997long} due to its high predictive ability for financial data as shown in literature \cite{fischer2018deep,ta2020portfolio}. Note that predicting the solution to every optimal portfolio on the efficient frontier is a much more complex problem than predicting stock prices or returns, but the network architecture is quite similar.

In order to train a network, we split up the matrix of historical daily returns $X$ by rows to get a train and test set. The first $t$ rows of $X$ are denoted as $X_{\text{train}}$ and the remaining $T-t$ rows as $X_{\text{test}}$. We then use $X_{\text{train}}$ as the input to the Markowitz model \eqref{mvmodel} and obtain a solution $w^*(\lambda)$ for every $\lambda$. We introduce the target vector $y_{\text{train}}^*$, defined as
\begin{align}
    (y_{\text{train}})_i^* =
    \begin{cases}
    0 & w_i^*(\lambda)= 0 \ \text{for all } \lambda \geq 0\\
    1 & \text{otherwise}.\\
    \end{cases}
\end{align}
This information is used to classify each stock into two groups: If $(y_{\text{train}})_i^* = 0$, stock $i$ is said to be in Class 0, and if $(y_{\text{train}})_i^*$, stock $i$ is said to be in Class 1.  We can then repeat this same process with $X_{\text{test}}$ to obtain $y_{\text{test}}^*$. 

We train an LSTM network with $X_{\text{train}}$ and $y_{\text{train}}^*$ and the network gives a prediction which we call $\Tilde{y}$. This prediction is then used to reduce the full covariance matrix $\Sigma$ to $\Tilde{\Sigma}$ by removing from $\Sigma$ row $i$ and column $i$ for each $i$ with $\Tilde{y_i} = 0$.

LSTM units learn from five weights, and they have three gates that control how much information gets through: forget gate, input gate, and output gate. They also combine each new input with old hidden output, and they can carry some of this older data. This cell state remembers how much information from the past based on the gates. Our network has 5 layers:  
\begin{enumerate}
    \item \textbf{Sequence Input Layer:} The single dimension input layer is given $X_{\text{train}}$.
    \item \textbf{LSTM Layer:} We elected to use 150 hidden nodes.  The state activation function is the hyperbolic tangent function. The gate activation function is the sigmoid function. The input and recurrent weights are initialized with Matlab's default Glorot \cite{glorot2010understanding} and Orthogonal \cite{saxe2013exact}, respectively.
    \item \textbf{Fully Connected Layer:} This layer takes in the output from the previous layer and reshapes the data to prepare for the classification.
    \item \textbf{Softmax Layer:} This function is defined in the appendix, computed with $X_{\text{train}}$ and the target vector $y_{\text{train}}^*$.
    \item \textbf{Weighted Cross Entropy Classification Output:} The weights are computed by the priors of training data and the function is also defined in the appendix. 
\end{enumerate}
After we train the network based on $X_{\text{test}}$, we input $X_{\text{test}}$ and the network gives a prediction for $\Tilde{y}$. We use this prediction to compare with $y_{\text{test}}^*$ to gain insight on the predictive ability of the network. 

\subsection{Reduction by LP: Prediction with the Parametric Simplex Method}\label{sec2sub2} One interpretation of the risk aversion parameter $\lambda$ in \eqref{mvmodel} is that it is the penalty parameter in a two-objective model.  In such a framework, it is also possible to shift the role of the penalty term to the portfolio return.  To do so, we introduce a new penalty parameter $\gamma \ge 0$.

We now review the LP reformulation of \eqref{mvmodel} as presented in \cite{vanderbei2020linear}.  Recall that the covariance matrix can be written as 
\[
\Sigma = A^T A, \text{ where } 
A = X - \bar{X}, 
\bar{X}_{ij} = \frac{1}{T-1} \sum_{t=1}^{T-1} X_{t,j}. 
\]
The matrix $A$ represents the deviation of the actual return from its expectation, and the covariance term in the Markowitz model measures the magnitude of this deviation in a quadratic sense.  To form the LP, we instead measure it in an absolute sense:
\[
\begin{array}{ll}
\max_{w} \quad & \gamma \mu^{T}w - \| A^T w \|_1 \\
\textrm{s.t.} \quad &e^{T}w = 1\\
& w \geq 0.
\end{array}
\]
Then, we introduce an auxiliary variable $v$ to complete the reformulation: 
\begin{equation}
\begin{aligned}
\max_{w} \quad & \gamma \mu^{T}w - \frac{1}{T}\sum_{i=1}^T v_i\\
\textrm{s.t.} \quad &-v \leq \textbf{A}^{T}w \leq v\\
& e^Tw = 1\\
& w,v \geq 0.\\
\label{lp}
\end{aligned}
\end{equation}
As described in \cite{vanderbei2020linear}, this problem is then solved using the parametric self-dual simplex method (\cite{dantzig1998linear}) with a specific pivot sequence that uncovers optimal portfolios for every value of $\gamma \ge 0$ in one pass of the method.  Our proposal in this paper is to use these portfolios to predict which investments will be included in the solutions of \eqref{mvmodel} and reduce its risk term accordingly. More specifically, solving the model using training data and then using the predictions to reduce the testing data.  

\subsection{Sparsification by Correlation}\label{sec2sub3}
In this method, we will replace the covariance matrix, $\Sigma$, in \eqref{mvmodel} with a sparse matrix obtained by replacing entries corresponding to small correlations with zero.  There are two reasons why this approach is reasonable: First, the covariance matrix is positive semidefinite, we expect it to be diagonally dominated. Therefore, the typical decision of whether or not to include an investment in the optimal portfolio is made first and foremost with a focus on return and variance. The off-diagonal values impact the optimal solution only if they have a significant impact on the risk. Second, large entries in the covariance matrix can be an indicator of strong correlation between investments’ historical returns.  We would expect strong correlations to be exhibited for many time periods, making it quite likely that such a relationship will continue into the future. As such, it stands to reason that the small values in the correlation matrix can be replaced with zeros, and that this change would lead to the corresponding covariance matrix to be sparse.

The challenge here is two-fold: (1) we need to determine an appropriate definition of a small correlation, and  (2) the resulting sparse covariance matrix needs to be positive semidefinite.  To address these challenges, we use a two-step process.  We determine a threshold for correlation, below which the correlation and, therefore, the covariance is replaced by 0.  After obtaining the resulting sparse matrix, we add back in some of the original values to re-establish positive semidefiniteness.  The domain of possible threshold values is taken to be the entries of the correlation matrix.

Let $\theta$ be the dense $N \times N$ correlation matrix, $\tau$ be a threshold value (with $0 \le \tau < 1$)
obtained from the unique values of $\theta$, and $\hat{\Sigma}(\tau)$ be the sparse $N \times N$ covariance matrix defined for threshold $\tau$. We propose the following simple scheme to sparsify $\Sigma$: 
\begin{equation} \label{reset}
    \hat{\Sigma}_{ij}(\tau) = 
    \begin{cases} 
    0 &-\tau \leq \theta_{ij} \leq \tau \\
    \Sigma_{ij} &\text{otherwise} \\ 
    \end{cases}, \text{ for all } i, j = 1,\dots,N 
\end{equation}
However, $\hat{\Sigma}(\tau_k)$ may be indefinite for any value of $\tau_k$ with no clear pattern, especially as the matrix size increases.  In order to remedy the indefiniteness, we pair our scheme with partial matrix completion.  Before formally describing our algorithm, we illustrate it with an example here (provided with further details in the Appendix).

Consider a $4 \times 4$ sparse covariance matrix whose nonzero elements are labeled with stars: 
\[
  \begin{bmatrix}
* & & &  \\
 & *& *&  \\
 & *& *&  *\\
 & & * & * \\
  \end{bmatrix}
\]
Since there are off-diagonal values in columns 2, 3 and 4, the matrix completion method would add back in the elements $(2,4)$ and $(4,2)$.  The resulting matrix would consist of the variance of the returns of Stock 1 and the covariance matrix of the returns for Stocks 2, 3, and 4: 
\[
 \begin{bmatrix}
 * &  & &  \\
 & *& *&  *\\
 & *& *&  *\\
 & * & * & * \\
  \end{bmatrix}
\]
This method is shown as Algorithm \ref{algo1}.

\begin{algorithm}
 \caption{Partial Matrix-Completion for Threshold $\tau$}\label{algo1}
\begin{algorithmic}[1]
\State set $\hat{\Sigma}(\tau)$ according to Equation \eqref{reset}.
\State $\hat{n} \gets$ column(s) where there exists an off-diagonal non-zero entry in $\hat{\Sigma}$ 
\For{$i \in \hat{n}$ and $j \in \hat{n}$}
\State $\hat{\Sigma}_{ij} (\tau) = \Sigma_{ij}$
\EndFor
\end{algorithmic}
\end{algorithm}

We can show that Algorithm \ref{algo1} always yields a positive semidefinite matrix, thereby ensuring that replacing $\Sigma$ with $\hat{\Sigma}$ in \eqref{mvmodel} always yields a convex quadratic programming problem.

\begin{theorem}
Let $\Sigma$ be a covariance matrix for $X$.  Then, $\hat{\Sigma}(\tau)$ obtained by applying Algorithm \ref{algo1} to $\Sigma$ is positive semidefinite for any value of $\tau$.
\end{theorem}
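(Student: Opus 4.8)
The plan is to show that $\hat\Sigma(\tau)$ is, after a symmetric permutation of indices, block diagonal with each diagonal block being a principal submatrix of $\Sigma$, and then invoke the fact that any principal submatrix of a positive semidefinite matrix is positive semidefinite. First I would observe that $\Sigma = A^T A$ (as noted in the excerpt), so $\Sigma$ is positive semidefinite and, crucially, so is every principal submatrix $\Sigma_{S,S}$ for $S \subseteq \{1,\dots,N\}$: for any vector $z$ supported on $S$, $z^T \Sigma z = \|Az\|_2^2 \ge 0$.

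Next I would analyze the structure produced by Algorithm~\ref{algo1}. Let $\hat n$ be the set of columns containing an off-diagonal nonzero after the thresholding step \eqref{reset}, and let $D = \{1,\dots,N\}\setminus \hat n$ be the remaining indices. The key structural claim is: after running the algorithm, $\hat\Sigma_{ij}(\tau) = 0$ whenever $i \ne j$ and at least one of $i,j$ lies in $D$. Indeed, if $i \in D$ then column $i$ had no off-diagonal nonzero after \eqref{reset}, and the completion loop only touches pairs $(i,j)$ with both indices in $\hat n$, so it never reintroduces a nonzero in row/column $i$. Meanwhile the loop sets $\hat\Sigma_{ij}(\tau) = \Sigma_{ij}$ for \emph{all} $i,j \in \hat n$. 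Hence, reordering so that the indices of $\hat n$ come first, $\hat\Sigma(\tau)$ has the block form
\begin{equation*}
\hat\Sigma(\tau) \;\cong\;
\begin{bmatrix}
\Sigma_{\hat n,\hat n} & 0 \\
0 & \mathrm{diag}(\Sigma_{ii})_{i \in D}
\end{bmatrix}.
\end{equation*}

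To finish, I would note that a block-diagonal symmetric matrix is positive semidefinite if and only if each block is. The top-left block $\Sigma_{\hat n,\hat n}$ is a principal submatrix of $\Sigma$, hence positive semidefinite by the first step. The bottom-right block is diagonal with entries $\Sigma_{ii} = \mathrm{Var}(x_{\cdot,i}) \ge 0$, hence positive semidefinite. Therefore $\hat\Sigma(\tau)$ is positive semidefinite, and since $\tau$ was arbitrary the conclusion holds for every threshold.

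The main obstacle — really the only nontrivial point — is establishing the structural claim that the completion step leaves no off-diagonal entry connecting $\hat n$ to its complement, i.e.\ that the algorithm genuinely produces the clean block decomposition above rather than some partially-completed matrix. This requires being careful about the definition of $\hat n$ (columns with an off-diagonal nonzero \emph{after} \eqref{reset}, not after completion) and checking that the double loop over $\hat n \times \hat n$ both (a) fully fills the $\hat n$ block back to $\Sigma_{\hat n,\hat n}$ and (b) never writes outside that block. Everything after that is the standard principal-submatrix argument.
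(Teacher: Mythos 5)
Your proposal is correct and follows essentially the same route as the paper: both arguments reduce $\hat\Sigma(\tau)$ (after a symmetric permutation) to a block-diagonal matrix whose blocks are a diagonal matrix of variances and the full principal submatrix $\Sigma_{\hat n,\hat n}$, and then conclude positive semidefiniteness blockwise. Your version is in fact slightly more careful than the paper's, since you explicitly justify the structural claim that the completion loop never writes outside the $\hat n \times \hat n$ block, which the paper asserts only implicitly via ``without loss of generality.''
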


\begin{proof}
Since $\Sigma$ is a covariance matrix, it is symmetric and positive semidefinite.  Let S denote the set of column indices of $\hat{\Sigma}(\tau)$ with only diagonal entries and let $S'$ denote its remaining column indices.  Without loss of generality, we can denote the form of $\hat{\Sigma}(\tau)$ as
\[
    \hat{\Sigma}(\tau) = 
    \begin{bmatrix}
    C & \\
    & D \\
    \end{bmatrix},
\]
where the submatrix $C$ is a diagonal matrix whose entry $C_{j,j}$ is the variance of Stock $j$'s returns for each stock $j \in S$ and submatrix $D$ is a full matrix whose entries match the corresponding terms in $\Sigma$.  Therefore, $D$ can also be defined as the covariance matrix of the returns of the stocks in set $S'$, and, as such, it is positive semidefinite.  
We can define $C^{\frac{1}{2}}$ as the diagonal matrix with the standard deviations of the stock returns from $S$ and $D^{\frac{1}{2}}$ as the corresponding entries of $A$.  Then, we can rewrite 
\[
    \hat{\Sigma}(\tau) = 
    \begin{bmatrix}
    C^{\frac{1}{2}} & \\
    & D^{\frac{1}{2}} \\
    \end{bmatrix}^T
    \begin{bmatrix}
    C^{\frac{1}{2}} & \\
    & D^{\frac{1}{2}} \\
    \end{bmatrix}.
\]
Therefore, $\hat{\Sigma}(\tau)$ is positive semidefinite.
\end{proof}

\section{Data}\label{sec3}
The financial data was collected from Yahoo! Finance over the dates January 23rd 2012 to December 31st 2019 for 374 firms selected from the S\&P500 index. The returns are computed using the percentage change of the closing price. This gives a total of 1999 days of data.  For reduction by neural networks, we partition the data into the first 1499 days for training and the subsequent 500 days for testing. We chose a 75/25 split since it would give a large enough testing set such that the covariance matrix would not be rank deficient.

Modifying the risk aversion parameter $\lambda$ can be challenging, since we cannot use sensitivity analysis within interior-point methods to guarantee that we find every portfolio along the efficient frontier. We approached this by finding the $\lambda$ that achieves the minimum risk of a portfolio, where the minimum risk is the optimal objective to the following problem.
\begin{equation}
\begin{aligned}
\min_{w} \quad &  w^{T} \Sigma w\\
\textrm{s.t.} \quad &e^{T}w = 1\\
& w \geq 0
\label{minRisk}
\end{aligned}
\end{equation} 
We then increase the $\lambda$ in (4) until we obtain a portfolio risk that is within the distance $\epsilon = 10^{-8}$ of its minimum risk. This is the maximum $\lambda$ for the dataset. Then, we used the findings of \cite{BS05b} on warmstarts and knowledge of prior performance of interior-point methods on portfolio optimization problems to start $\lambda$ at 0 and increase it slowly until we reached its maximum value.  If any problem took more than 15 iterations to solve for a value of $\lambda$, we determined that we had increased $\lambda$ too fast and tried a smaller value first.

\section{Numerical Testing}\label{sec4}
Numerical testing was conducted to compare the optimizer performance and portfolio performance of each approach. 
We used the {\sc yfinance} package \cite{yfinance} v0.1.54 to obtain stock data for data collection. The numerical analysis was performed on a computer with a 2.7 GHz Quad-Core Intel Core i7 processor. We used Gurobi Optimizer version 10.0.02 build v10.0.2rc0 (mac64) in {\sc Matlab} 2023a to solve \eqref{mvmodel}. Gurobi was also used to solve \eqref{lp} for an initial value of $\gamma$, followed by pivots conducted in a Python 3 implementation of the parametric simplex method. The neural networks were implemented with {\sc Matlab} Deep Learning toolbox.



For every solution on the efficient frontier, we recorded the CPU time, total number of iterations, and CPU time per iteration (TPI). We used 632 values of $\lambda$ to construct the efficient frontiers. To represent the values of $\lambda$ slowly increasing, we used a collection of linearly spaced vectors. This is given in Table \ref{tab:lambda}.  Given that CPU time can vary on each run for reasons external to the numerical testing, we computed the efficient frontier 20 times and recorded the mean for the CPU times. 
\begin{table}[ht]
    \centering
    \begin{tabular}{lll}
    \toprule
        \inlineMatlab{x} & \inlineMatlab{y} & \inlineMatlab{n}  \\
        \botrule 
        0 & 1 & 121  \\
        1 & 2 & 61 \\
        2 & 4 & 51\\
        4 & 7 & 61\\
        7& 20 & 61\\
        21 & 60 & 81\\
        60 & 100 & 51\\
        100 & 200 & 71\\
        200 & \inlineMatlab{maxLambda} & 81\\
        \botrule
    \end{tabular}
    \caption{In {\sc Matlab} syntax, \inlineMatlab{linspace(x,y,n)} generates a vector of \inlineMatlab{n} evenly spaced points between \inlineMatlab{x} and \inlineMatlab{y}. $\lambda$ is the union of these vectors in the table where \inlineMatlab{maxLambda} is given by the process outlined in Section \ref{sec3}. }
    \label{tab:lambda}
\end{table}


In addition, each solution on the efficient frontier gives an optimal objective, expected risk, expected return and actual return which we also reported. The portfolio performance looks at the mean of the objective function $\mu^{T}w - \lambda  w^T \Sigma w$,  expected risk $\VAR[\textbf{R}] = w^T \Sigma w$, expected return $\EX[\textbf{R}] = \mu^T w$, and actual return. The actual portfolio return ($\Tilde{\textbf{R}}$) is computed using the actual stock returns on day $T$ ($\Tilde{\mu}$):
\[ \Tilde{\textbf{R}} = \Tilde{\mu}^T w. \]
Lastly, we recorded the number of assets which have nonzero weights in any efficient portfolio.  The use of the primal simplex solver in Gurobi allowed for a precise count of the number of nonzero weights in each solution.

\subsection{Reductions by Neural Networks and by Linear Programming}\label{sec4sub1}
As presented in Section \ref{sec2sub1}, we used the adaptive moment estimation (Adam) optimizer \cite{kingma2014adam} as the training solver method, with the parameter settings given in Table \ref{tab:hyperparameters}. The portfolio and optimizer results are found in Table \ref{tab:portfolioReductive} and Table \ref{tab:optimizerReductive}, respectively. 
\begin{table}[ht]
\renewcommand{\arraystretch}{1.25}
\centering
\begin{tabular}{@{}ll@{}}
\toprule
Hyperparameter & Value \\ 
\midrule
  Gradient Decay Factor  &  0.9000 \\
  Squared Gradient Decay Factor &  0.9990\\
  Initial Learning Rate     & 0.00025\\
  L2 Regularization &  0.00001 \\ 
  Max Epochs & 60 \\
  Mini Batch Size & 187\\
  Number of Mini Batches & 2 \\
  Epsilon Denominator Offset &  1.0000e-08\\
\botrule
\end{tabular}
\caption{Hyperparameters for the Adam optimizer. }
\label{tab:hyperparameters}
\end{table}




\begin{table}[ht]
\renewcommand{\arraystretch}{1.25}
\centering
\begin{tabular}{@{}lrrr@{}}
\toprule
Reduction Type & None & LSTM & LP  \\
\botrule
Matrix Size & $374 \times 374$ & $128 \times 128$ & $82 \times 82$ \\
\midrule
Total Iter & 168 & 71 & 95 \\ 
\midrule
Avg Iter & 0.2658 & 0.1123 & 0.1503\\
\midrule
Avg Time    &  0.0183  & 0.0162  & 0.0165\\ 
\midrule
Avg TPI   & 0.0688 &  0.1439 &  0.1097\\ 
\botrule
\end{tabular}
\caption{Optimizer Performance for Reductive Methods.}
\label{tab:optimizerReductive}
\end{table}
\begin{table}[t]
\renewcommand{\arraystretch}{1.25}

\centering
\begin{tabular}{@{}lrrr@{}}
\toprule
Reduction Type & None & LSTM & LP  \\
\botrule
Matrix Size & $374 \times 374$ & $128 \times 128$ & $82 \times 82$ \\
\midrule
Assets             & 65        & 43 &  30 \\
\midrule
 Obj     &    -0.00229 & -0.00344 & -0.00250 \\
\midrule
 $\VAR[\textbf{R}]$ & 0.00029  & 0.00017  & 0.00017  \\
\midrule
 $\EX[\textbf{R}] $     & 0.00161  & 0.00145  & 0.00128  \\
\midrule
 $\Tilde{\textbf{R}}$   &0.01490 &	0.00104	& 0.00379\\
\botrule
\end{tabular}
\caption{Portfolio Performance for Reductive Methods.}
\label{tab:portfolioReductive}
\end{table}

For each predictive method, we include a confusion matrix \cite{stehman1997selecting} to visualize the accuracy in Table 5 and Table 6.  
           
\confusionMat{246}{65.8}{46}{12.3}{63}{16.8}{19}{5.1}{The confusion matrix for the LP network predictions shows a 70.9\% overall accuracy for the 374 stocks.  The classifications are 0 or 1, where 0 represents not investing in the stock and 1 is investing in the stock. }

\confusionMat{205}{54.8}{41}{11}{104}{27.8}{24}{6.4}{The confusion matrix for the LSTM network predictions shows a 61.2\% overall accuracy for the 374 stocks.  The classifications are 0 or 1, where 0 represents not investing in the stock and 1 is investing in the stock. }

\subsection{Sparsification by Correlation}\label{sec4sub2}

Our proposed method of using correlations to sparsify the covariance matrix, followed by partial matrix completion to recover positive semidefiniteness, was effective and efficient in our numerical experiments.  We measure sparsity by the number of zero-valued elements divided by the total number of elements:
\begin{align}
 \text{Sparsity} = \frac{\text{number of zero-valued elements}}{\text{total number of elements}}
\end{align}

\begin{figure*}[ht]
    \centering    
    \includegraphics[width = 0.6\textwidth]{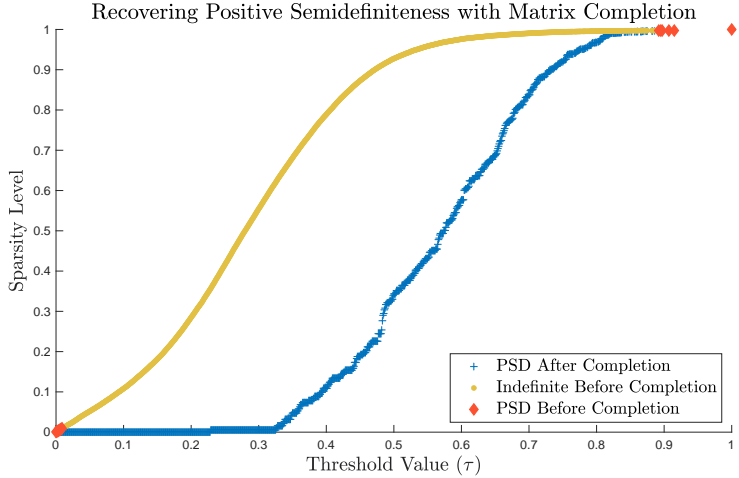}
    \label{fig:matrixComp}
    \caption{This graph compares the positive definiteness and sparsity using the brute force search with and without matrix completion. }
\end{figure*}
We see that for $0.4 \leq \tau \leq 0.7$, partial matrix completion as a means to re-establish positive semidefiniteness also requires a significant compromise on sparsity. However, a correlation coefficient is considered to have a weak relationship for $0 \leq \tau \leq 0.5$ and moderate relationship for $0.5 \leq \tau \leq 0.8$  \cite{devore2012modern}. Therefore, using Algorithm \ref{algo1} with $\tau > 0.8$ can indeed result in the retention of only strong relationships in our risk measure while simultaneously promoting both sparsity and positive semidefiniteness.  

To measure the method's optimizer and portfolio performance, we chose seven positive semidefinite matrices with varying sparsity levels where the 0\% sparse matrix is the full covariance matrix. The input to the Markowitz model are the covariance matrices and the full $\mu$ vector, which we did not sparsify since it is a linear coefficient thus has little strain on the runtime. 

\begin{table}[ht]
\renewcommand{\arraystretch}{1.25}
\setlength{\tabcolsep}{4pt} 
\centering
\resizebox{.47\textwidth}{!}{
\begin{tabular}{@{}L{1.4cm}rrrrrrr@{}}
\toprule
Sparsity  & 0\%  & 50\% & 60\% & 70\% & 80\% & 90\% & 99\% \\ 
\botrule
Total Iter & 168 & 347 & 343 & 382 & 419 & 453 & 540 \\ 
\midrule
Avg Iter     & 0.266&   0.549  &   0.543  &   0.604  &   0.663  &   0.717  &   0.854\\ 
\midrule
Avg Time & 0.018 & 0.018 &  0.018 &  0.017  &0.017 & 0.017   & 0.017\\ 
\midrule
Avg TPI   &  0.069   &  0.033 &  0.032   &0.029  & 0.026 & 0.023  & 0.019\\ 
\botrule
\end{tabular}}
\caption{Optimizer Performance for Sparsification by Correlation. }
\end{table}

\begin{table}[ht]
\renewcommand{\arraystretch}{1.3}
\setlength{\tabcolsep}{4pt} 
\centering
\resizebox{.47\textwidth}{!}{
\begin{tabular}{@{}L{1.07cm}rrrrrrr@{}}
\toprule
Sparsity             & 0\%  & 50\% & 60\% & 70\% & 80\% & 90\% & 99\% \\ 
\botrule
Assets  & 65       & 150      & 184      & 212      & 245      & 286      & 356\\
\midrule
Obj & -0.002  & -0.009 & -0.010 & -0.011 & -0.012 & -0.013 & -0.015  \\
\midrule
$\VAR[\textbf{R}]$ & 0.0003  & 0.0003  & 0.0003  & 0.0003  & 0.0003  & 0.0003  & 0.0003  \\
\midrule
$\EX[\textbf{R}] $  & 0.0016  & 0.0018  & 0.0018  & 0.0018  & 0.0018  & 0.0018  & 0.0018  \\

\midrule
 $\Tilde{\textbf{R}}$ & 0.0149  & 0.0182  & 0.0183  & 0.0190  & 0.0194  & 0.0198  & 0.0208  \\

\botrule
\end{tabular}
}
\caption{Portfolio Performance for Sparsification by Correlation.}
\end{table}

In the dataset for our numerical testing discussed so far (details provided in Section \ref{sec3}), we favored having a large dataset in order to observe and establish strong correlations.  However, given the strong performance of the S\&P500 index during this time period and the fact that we only collected data on stocks that were represented in the index for the entire time horizon, we did not observe any significant negative correlations in the dataset.  Nevertheless, negative correlations, and the resulting negative covariances, are a critical element of reducing risk via hedging opportunities.  As such, we repeated the above experiment with a shorter dataset that has more stocks: 497 assets for 502 days starting November 21 2017 to November 19th 2019. The covariance matrix for this dataset included strong negative correlations.  

\begin{table}[ht]
\renewcommand{\arraystretch}{1.25}
\setlength{\tabcolsep}{4pt} 
\centering
\resizebox{.47\textwidth}{!}{
\begin{tabular}{@{}L{1.4cm}rrrrrrr@{}}
\toprule
Sparsity  & 0\%  & 50\% & 60\% & 70\% & 80\% & 90\% & 99\% \\
\botrule
Total Iter & 188 & 347 & 392 & 436 & 511 & 563 & 660 \\ 
\midrule
Avg Iter & 0.298 & 0.549  & 0.620 &   0.69 &   0.809&  0.891 &   1.044 \\ 
\midrule
Avg Time & 0.022 &  0.018 &  0.019 &  0.018 &   0.018 &  0.017 &  0.017\\ 
\midrule
Avg TPI &  0.073 &   0.034 &   0.03 &  0.026  & 0.022 & 0.019  &  0.016\\ 
\botrule
\end{tabular}}
\caption{Optimizer Performance for Sparsification by Correlation when Negative Correlations are Present.}
\end{table}

\begin{table}[ht]
\renewcommand{\arraystretch}{1.25}
\setlength{\tabcolsep}{4pt} 
\centering
\resizebox{.47\textwidth}{!}{
\begin{tabular}{@{}L{1.07cm}rrrrrrr@{}}
\toprule
Sparsity  & 0\%  & 50\% & 60\% & 70\% & 80\% & 90\% & 99\% \\
\botrule
Assets   & 72     & 191    & 228    & 266    & 316    & 371    & 465    \\
\midrule
 Obj  & -0.002 & -0.011 & -0.013 & -0.013 & -0.014 & -0.016 & -0.017 \\
\midrule
 $\VAR[\textbf{R}]$ & 0.0003 & 0.0003 & 0.0003 & 0.0003 & 0.0003 & 0.0003 & 0.0003 \\
\midrule
 $\EX[\textbf{R}] $    & 0.0015 & 0.0018 & 0.0018 & 0.0018 & 0.0018 & 0.0018 & 0.0018 \\
\midrule
 $\Tilde{\textbf{R}}$ & 0.0109 & 0.0095 & 0.0101 & 0.0100 & 0.0097 & 0.0102 & 0.0105 \\
\botrule
\end{tabular}}
\caption{Portfolio Performance for Sparsification by Correlation when Negative Correlations are Present}
\end{table} 
\section{Conclusion}\label{sec5}
Reducing the size of the covariance matrix decreased the number of iterations and CPU runtime, yet increased the runtime per iteration. The dimension reduction improved $\VAR[\textbf{R}]$, yet has lower $\EX[\textbf{R}]$ and $\Tilde{\textbf{R}}$. Both reduction and no reduction showed that $\Tilde{\textbf{R}}$ differs from $\EX[\textbf{R}]$, which is not uncommon since the Markowitz model has been documented to emphasize estimation error, especially in larger portfolios \cite{frankfurter1971portfolio, jobson1980estimation}. 

A common issue with the machine learning methods, is that the imbalanced dataset made it challenging for models to distinguish when an asset should be invested in, because Class 1 is much smaller than Class 0. In both the training and testing sets, 17.38\% is Class 1 and 82.62\% is Class 0. This results in the model labeling nearly every asset as Class 0, thus being impractical as the model does not invest in any stocks. We saw this primarily in Section \ref{sec4sub2}, so we opted to take extra precautions, such as using the weighted cross entropy. This weight would impose a penalty, thus forcing Class 1 to be larger. The LSTM network predicted similar Class 1 and Class 2 priors, but struggled to distinguish between the two classes. It follows that LP had better predictions compared to LSTM with 70.9\% and 61.2\% overall accuracy, respectively.  

Moreover, the reduction methods required time to train models before predicting on testing data. The LSTM network took 26 minutes to train, while the LP model took an hour. Although the training is a one time occurrence, it is worth considering this cost.  

We conclude that our proposed LSTM model provided insight on the predictive abilities of neural networks in portfolio optimization, while motivating future extensions. In our model, the hyperparameters were kept simple to avoid overfitting, and we hope to explore alternative network architecture to better manage imbalanced data.


Sparsification most notably decreased the runtime per iteration. As the covariance matrix grows more sparse, the optimal solution invests in more assets.  This requires more basis changes to compute, which explains the increase in total number of iterations and average runtime. With respect to the portfolio performance, the $\VAR[\textbf{R}]$ was similar to the dense portfolio and $\EX[\textbf{R}]$ increased as the sparsity increased. This behavior could suggest that removing the off-diagonal values from the covariance matrix may increase diversification in the case of predominantly positively correlated stock returns.

Overall, sparsification yields better optimizer results while achieving similar portfolio $\VAR[\textbf{R}]$ and $\EX[\textbf{R}]$. In our testing, a highly efficient solver such as Gurobi exhibited improvement, thus we would expect the improvement to be even more significant for larger problem sizes. As future work, we will investigate other sparsification procedures that preserve positive semidefiniteness. In addition, we plan to develop a parametric self-dual simplex method for quadratic programming, with specific application to the Markowitz model for portfolio optimization.

\backmatter

\section*{Declarations}

\subsection{Funding}
The authors declare that no funds, grants, or other support were received during the preparation of this manuscript.

\subsection{Competing Interests}
The authors have no relevant financial or non-financial interests to disclose.



\subsection{Data}
The financial data was collected from Yahoo! Finance and is available at the following repository \url{https://github.com/cassiebuhler/PODS}.

\subsection{Code availability}
The code is available at \url{https://github.com/cassiebuhler/PODS}.

\subsection{Authors' contributions}

All authors contributed to the study conception and design. Material preparation, data collection and analysis were performed by Cassidy Buhler. The first draft of the manuscript was written by Cassidy Buhler and all authors commented on previous versions of the manuscript. All authors read and approved the final manuscript.


\noindent
\bigskip

\begin{appendices}

\section{Activation and Loss Functions}\label{secA1}
The activation and loss functions for the LSTM are provided in Table \ref{tab:activationloss}.
\begin{table}[ht]
\renewcommand{\arraystretch}{1.25}
\centering
\resizebox{.45\textwidth}{!}{
\begin{tabular}{@{}L{2.1cm}l@{}}
\toprule
Function & Equation \\ 
\midrule
  Softmax & $f(x_i) = \dfrac{\exp{\left(x_i\right)}}{\displaystyle \sum_{j \in K} \exp{\left(x_j\right)}}$ \\[2.5em]
  Sigmoid & $f(x) = \dfrac{1}{1+\exp{\left(-x\right)}}$\\[2.5em]
  Hyperbolic Tangent      & $f(x) = \dfrac{\exp{\left(x\right)} - \exp{\left(-x\right)}}{\exp{\left(x\right)} + \exp{\left(-x\right)}}$\\[2em]
  Cross Entropy Loss & $    L(y,\hat{y}) = -\dfrac{1}{\eta}\displaystyle\sum_{i=1}^K y_{i} \ln\left({\hat{y}_{i}}\right)$\\[2em]
  Weighted Cross Entropy Loss & $    L(\beta,y,\hat{y}) = -\dfrac{1}{\eta}\displaystyle \sum_{i=1}^K \beta_i y_{i} \ln\left({\hat{y}_{i}}\right)$\\
\botrule
\end{tabular}}
\caption{Activation and loss functions for training networks. $\eta$ is the number of samples, $K$ is total number of classes, $\beta_i$ is the weight for class $i$, $y_i$ is the actual value for class $i$, and $\hat{y}_i$ is the prediction value for class $i$.}
\label{tab:activationloss}
\end{table}

\section{Example of Sparse Partial Matrix Completion}\label{secA2}
Let $P$ be the price matrix where the rows are time steps and the columns are assets.
\begin{align}
\textbf{P} = 
\begin{bmatrix}
    2 & 3 & 5 & 2\\
    6 & 7 & 9 & 3\\
    4 & 8 & 6 & 5\\
    5 & 2 & 1 & 2\\
    2 & 5 & 3 & 6\\
\end{bmatrix}
\end{align}
 We compute the return matrix $X$ as in Equation 2.
 \begin{align}
X = 
\begin{bmatrix}
    2 & 1.3 & 0.8 & 0.5\\
    -0.3 & 0.14 & -0.3 & 0.67\\
    0.25 & -0.75 & -0.83 & -0.6\\
    -0.6 & 1.5 & 2 &  2\\
\end{bmatrix}
\end{align}
This gives us the covariance $cov(X) = \Sigma$ and correlation $corr(X) = \theta$. 
\begin{align}
    \Sigma = 
    \begin{bmatrix}
    1.37  & 0.27 & -0.08 &  -0.47\\
    0.27 & 1.12 & 1.25 & 0.93 \\
    -0.08 & 1.25 & 1.59 & 1.21\\
    -0.47 & 0.93 & 1.21 &  1.14 \\
\end{bmatrix}\\
    \theta = 
    \begin{bmatrix}
    1  & 0.21 & -0.05 &  -0.38 \\
    0.21 & 1 & 0.93 & 0.82 \\
    -0.05 & 0.93 & 1 & 0.9 \\
    -0.38 & 0.82 & 0.9 & 1 \\
\end{bmatrix}
\end{align}
Sorting the elements of $\theta$ in order of magnitude, we get
\begin{align}
    \tau = 
    \{0.05, 0.21, 0.38, 0.82, 0.9, 0.93, 1\}
\end{align}
We start with $\tau_{(1)} = 0.05$ and obtain the following matrix.
\begin{align}
   &\hat{ \Sigma}(\tau_{(1)}) = 
    \begin{bmatrix}
    1.37  & 0.27 & 0&  -0.47\\
    0.27 & 1.12 & 1.25 & 0.93 \\
    0 & 1.25 & 1.59 & 1.21\\
    -0.47 & 0.93 & 1.21 &  1.14 \\
\end{bmatrix}
    \not\succeq 0 
\end{align}
And following the same process for $\tau_{(2:n)}$
\begin{align}
&\hat{ \Sigma}(\tau_{(2)}) = 
    \begin{bmatrix}
    1.37  & 0 & 0&  -0.47\\
   0 & 1.12 & 1.25 & 0.93 \\
    0 & 1.25 & 1.59 & 1.21\\
    -0.47 & 0.93 & 1.21 &  1.14 \\
\end{bmatrix}
 \succeq 0 \\
& \hat{ \Sigma}(\tau_{(3)}) = 
    \begin{bmatrix}
    1.37  & 0 & 0&  0\\
   0 & 1.12 & 1.25 & 0.93 \\
    0 & 1.25 & 1.59 & 1.21\\
    0 & 0.93 & 1.21 &  1.14 \\
\end{bmatrix}
\succeq 0 \\
&\hat{ \Sigma}(\tau_{(4)}) = 
    \begin{bmatrix}
    1.37  & 0 & 0&  0\\
   0 & 1.12 & 1.25 & 0 \\
    0 & 1.25 & 1.59 & 1.21\\
    0 & 0 & 1.21 &  1.14 \\
\end{bmatrix}
\not\succeq 0 \\
& \hat{ \Sigma}(\tau_{(5)}) = 
        \begin{bmatrix}
    1.37  & 0 & 0&  0\\
   0 & 1.12 & 1.25 & 0 \\
    0 & 1.25 & 1.59 & 0\\
    0 & 0 & 0 &  1.14 \\
\end{bmatrix}
\succeq 0 \\
&\hat{ \Sigma}(\tau_{(6)}) = 
    \begin{bmatrix}
    1.37  & 0 & 0 &  0\\
    0 & 1.12 & 0 & 0 \\
    0 & 0 & 1.59 & 0 \\
    0  & 0 & 0 &  1.14 \\
\end{bmatrix}
\end{align}
This method gives $ \{\tau_{2},\tau_3,\tau_5,\tau_6,\tau_7\} \in \hat{\tau} $. See that the matrix became indefinite in the process but the final matrix is positive semidefinite. As mentioned, there is no pattern other than $\tau_{(n)} \in \hat{\tau}$.

Now let's sparsify with the partial matrix completion method for the $\tau_{(i)} \notin \hat{\tau}$.
\begin{align*}
   \hat{ \Sigma}(\tau_{(1)}) =& 
    \begin{bmatrix}
    1.37  & 0.27 & 0&  -0.47\\
    0.27 & 1.12 & 1.25 & 0.93 \\
    0 & 1.25 & 1.59 & 1.21\\
    -0.47 & 0.93 & 1.21 &  1.14 \\
\end{bmatrix}
\not\succeq \\
\implies 
&\begin{bmatrix}
    1.37  & 0.27 &-0.08 &  -0.47\\
    0.27 & 1.12 & 1.25 & 0.93 \\
    -0.08 & 1.25 & 1.59 & 1.21\\
    -0.47 & 0.93 & 1.21 &  1.14 \\
\end{bmatrix}
\succeq 0
\\[1em]
 \hat{ \Sigma}(\tau_{(4)}) =& 
    \begin{bmatrix}
    1.37  & 0 & 0&  0\\
   0 & 1.12 & 1.25 & 0 \\
    0 & 1.25 & 1.59 & 1.21\\
    0 & 0 & 1.21 &  1.14 \\
\end{bmatrix}
\not\succeq 0 \\ 
\implies 
&\begin{bmatrix}
    1.37  & 0& 0&  0\\
   0 & 1.12 & 1.25 & 0.93 \\
    0 & 1.25 & 1.59 & 1.21\\
    0 & 0.93 & 1.21 &  1.14 \\
\end{bmatrix}
\succeq 0
\end{align*}
For a less sparse matrix, such as $\hat{ \Sigma}(\tau_{(1)})$, the matrix completion method loses the sparsity. However, we see that for $\hat{ \Sigma}(\tau_{(4)}) $, we are able to gain back positive semidefiniteness by adding in only 2 elements. The trade-off of sparsity and positive semidefiniteness is shown in Figure \ref{fig:matrixComp}.




\end{appendices}


\bibliographystyle{sn-aps}
\bibliography{ref}

\begin{thebibliography}{10}
\providecommand{\url}[1]{{#1}}
\providecommand{\urlprefix}{URL }
\providecommand{\doi}[1]{\url{https://doi.org/#1}}
\bibcommenthead

\bibitem{markowitz}
H.~Markowitz, Portfolio selection.
\newblock The Journal of Finance \textbf{7}, 77--91 (1952)

\bibitem{dellano2016addressing}
F.~deLlano Paz, P.M. Fernandez, I.~Soares, Addressing 2030 eu policy framework
  for energy and climate: Cost, risk and energy security issues.
\newblock Energy \textbf{115}, 1347--1360 (2016)

\bibitem{zhang2018optimal}
S.~Zhang, T.~Zhao, B.C. Xie, What is the optimal power generation mix of
  {C}hina? an empirical analysis using portfolio theory.
\newblock Applied Energy \textbf{229}, 522--536 (2018)

\bibitem{arnesano2012extension}
M.~Arnesano, A.~Carlucci, D.~Laforgia, Extension of portfolio theory
  application to energy planning problem--the {I}talian case.
\newblock Energy \textbf{39}(1), 112--124 (2012)

\bibitem{ostadi2020risk}
B.~Ostadi, O.M. Sedeh, A.H. Kashan, Risk-based optimal bidding patterns in the
  deregulated power market using extended {M}arkowitz model.
\newblock Energy \textbf{191}, 116516 (2020)

\bibitem{kellner2019sustainability}
F.~Kellner, S.~Utz, Sustainability in supplier selection and order allocation:
  Combining integer variables with {M}arkowitz portfolio theory.
\newblock Journal of Cleaner Production \textbf{214}, 462--474 (2019)

\bibitem{ban2018machine}
G.Y. Ban, N.~El~Karoui, A.E. Lim, Machine learning and portfolio optimization.
\newblock Management Science \textbf{64}(3), 1136--1154 (2018)

\bibitem{mulvey2017machine}
J.M. Mulvey, Machine learning and financial planning.
\newblock IEEE Potentials \textbf{36}(6), 8--13 (2017)

\bibitem{wang2019applying}
J.~Wang, J.~Kim, Applying least squares support vector machines to
  mean-variance portfolio analysis.
\newblock Mathematical Problems in Engineering \textbf{2019} (2019)

\bibitem{bennett2014data}
M.~Bennett, Data mining with {M}arkowitz portfolio optimization in higher
  dimensions.
\newblock Available at SSRN 2439051  (2014)

\bibitem{solin2019forecasting}
M.M. Solin, A.~Alamsyah, B.~Rikumahu, M.A.A. Saputra, in \emph{2019 7th
  International Conference on Information and Communication Technology
  (ICoICT)} (IEEE, 2019), pp. 1--7

\bibitem{freitas2009prediction}
F.D. Freitas, A.F. De~Souza, A.R. de~Almeida, Prediction-based portfolio
  optimization model using neural networks.
\newblock Neurocomputing \textbf{72}(10-12), 2155--2170 (2009)

\bibitem{hargreaves2013stock}
C.A. Hargreaves, P.~Dixit, A.~Solanki, Stock portfolio selection using data
  mining approach.
\newblock IOSR Journal of Engineering \textbf{3}(11), 42--48 (2013)

\bibitem{fu2018machine}
X.~Fu, J.~Du, Y.~Guo, M.~Liu, T.~Dong, X.~Duan, A machine learning framework
  for stock selection.
\newblock arXiv preprint arXiv:1806.01743  (2018)

\bibitem{vanderbei2020linear}
R.J. Vanderbei, \emph{Linear programming: foundations and extensions}, vol. 285
  (Springer Nature, 2020)

\bibitem{gilbert1992sparse}
J.R. Gilbert, C.~Moler, R.~Schreiber, Sparse matrices in {MATLAB}: Design and
  implementation.
\newblock SIAM Journal on Matrix Analysis and Applications \textbf{13}(1),
  333--356 (1992)

\bibitem{hochreiter1997long}
S.~Hochreiter, J.~Schmidhuber, Long short-term memory.
\newblock Neural computation \textbf{9}(8), 1735--1780 (1997)

\bibitem{fischer2018deep}
T.~Fischer, C.~Krauss, Deep learning with long short-term memory networks for
  financial market predictions.
\newblock European Journal of Operational Research \textbf{270}(2), 654--669
  (2018)

\bibitem{ta2020portfolio}
V.D. Ta, C.M. Liu, D.A. Tadesse, Portfolio optimization-based stock prediction
  using long-short term memory network in quantitative trading.
\newblock Applied Sciences \textbf{10}(2), 437 (2020)

\bibitem{glorot2010understanding}
X.~Glorot, Y.~Bengio, in \emph{Proceedings of the thirteenth international
  conference on artificial intelligence and statistics} (2010), pp. 249--256

\bibitem{saxe2013exact}
A.M. Saxe, J.L. McClelland, S.~Ganguli, Exact solutions to the nonlinear
  dynamics of learning in deep linear neural networks.
\newblock arXiv preprint arXiv:1312.6120  (2013)

\bibitem{dantzig1998linear}
G.B. Dantzig, \emph{Linear programming and extensions}, vol.~48 (Princeton
  university press, 1998)

\bibitem{BS05b}
H.~Benson, D.~Shanno, Interior-point methods for nonconvex nonlinear
  programming: Regularization and warmstarts.
\newblock Computational Optimization and Applications \textbf{40}(2), 143--189
  (2008)

\bibitem{yfinance}
R.~Aroussi.
\newblock {yfinance}.
\newblock \url{https://pypi.org/project/yfinance/}.
\newblock Accessed on July 21 2020

\bibitem{kingma2014adam}
D.P. Kingma, J.~Ba, Adam: A method for stochastic optimization.
\newblock arXiv preprint arXiv:1412.6980  (2014)

\bibitem{stehman1997selecting}
S.V. Stehman, Selecting and interpreting measures of thematic classification
  accuracy.
\newblock Remote sensing of Environment \textbf{62}(1), 77--89 (1997)

\bibitem{devore2012modern}
J.L. Devore, K.N. Berk, \emph{Modern mathematical statistics with applications}
  (Springer, 2012)

\bibitem{frankfurter1971portfolio}
G.M. Frankfurter, H.E. Phillips, J.P. Seagle, Portfolio selection: the effects
  of uncertain means, variances, and covariances.
\newblock Journal of Financial and Quantitative Analysis pp. 1251--1262 (1971)

\bibitem{jobson1980estimation}
J.D. Jobson, B.~Korkie, Estimation for {M}arkowitz efficient portfolios.
\newblock Journal of the American Statistical Association \textbf{75}(371),
  544--554 (1980)

\end{thebibliography}

\end{document}